\numberwithin{equation}{section}
\newcommand{\D}{\displaystyle}
\newcommand{\R}{{\mathbb R}}
\newcommand{\T}{{\mathbb T}}
\newcommand{\Z}{{\mathbb Z}}
\newcommand{\N}{{\mathbb N}}
\newcommand{\car}{{\bold 1}}
\theoremstyle{plain}
\newtheorem{Th}{Theorem}[section]
\newtheorem{Le}{Lemma}[section]
\newtheorem{Pro}{Proposition}[section]
\theoremstyle{definition}
\title{Spectral extrema and Lifshitz tails for non monotonous alloy
  type models}
\author{Fr{\'e}d{\'e}ric Klopp\footnote{%
LAGA, U.M.R. 7539 C.N.R.S, Institut Galil{\'e}e,
Universit{\'e} de Paris-Nord, 99 Avenue J.-B.  Cl{\'e}ment, F-93430
Villetaneuse, France\ et \  Institut Universitaire de France.  
Email: {\tt klopp@math.univ-paris13.fr}} 
\ and Shu Nakamura\footnote{%
Graduate School of Mathematical Sciences, University of Tokyo, 
3-8-1 Komaba, Meguro-ku, Tokyo, Japan 153-8914. 
Email: {\tt shu@ms.u-tokyo.ac.jp}}} 
\begin{document}
\maketitle
\begin{abstract} In the present note, we determine the ground state
  energy and study the existence of Lifshitz tails near this energy
  for some non monotonous alloy type models.  Here, non monotonous
  means that the single site potential coming into the alloy random
  potential changes sign. In particular, the random operator is not a
  monotonous function of the random variables.
  \vskip.5cm\noindent
  \textsc{R{\'e}sum{\'e}.} Cet article est consacr{\'e} {\`a} la d{\'e}termination de
  l'{\'e}nergie de l'{\'e}tat fondamental et {\`a} l'{\'e}tude de possibles
  asymptotiques de Lifshitz au voisinage de cette {\'e}nergie pour
  certains mod{\`e}les d'Anderson continus non monotones. Ici, non
  monotone signifie que le potentiel de simple site entrant dans la
  composition du potentiel al{\'e}atoire change de signe. En particulier,
  l'op{\'e}rateur al{\'e}atoire n'est pas une fonction monotone des variables
  al{\'e}atoires.
\end{abstract}
\setcounter{section}{-1}
\section{Introduction and results}
\label{intro}
In this paper, we consider the continuous alloy type (or Anderson)
random Schr{\"o}dinger operator:
\begin{equation}
  \label{eq:1}
  H_\omega=-\Delta+V_\omega\text{ where
  }V_\omega(x)=\sum_{\gamma\in\Z^d}\omega_\gamma V(x-\gamma)
\end{equation}
on $\R^d$, $d\geq 1$, where $V$ is the site potential, and
$(\omega_\gamma)_{\gamma\in\Z^d}$ are the random coupling constants.
Throughout this paper, we assume
\begin{description}
\item[(H1)]
  \begin{enumerate}
  \item $V:\ \R^d\to\R$ is $L^p$ (where $p=2$ if $d\leq 3$ and $p>d/2$
    if $d>3$), non identically vanishing and supported in
    $(-1/2,1/2)^d$;
  \item $(\omega_\gamma)_\gamma$ are independent identically
    distributed (i.i.d.) random variables distributed in $[a,b]$
    ($a<b$) with essential infimum $a$ and essential supremum $b$.
  \end{enumerate}
\end{description}
Let $\Sigma$ be the almost sure spectrum of $H_\omega$ and
$E_-=\inf\Sigma$. When $V$ has a fixed sign, it is well known that the
$E_-=\inf(\sigma(-\Delta+V_{\overline{b}}))$ if $V\leq0$ and
$E_-=\inf(\sigma(-\Delta+V_{\overline{a}}))$ if $V\geq0$. Here,
$\overline{x}$ is the constant vector
$\overline{x}=(x)_{\gamma\in\Z^d}$.\\
Moreover, in this case, it is well known that the integrated density
of states of the Hamiltonian (see e.g.~\eqref{eq:4}) admits a Lifshitz
tail near $E_-$, i.e., that the integrated density of states at energy
$E$ decays exponentially fast as $E$ goes to $E_-$ from above. We
refer
to~\cite{MR99c:82038,MR87c:82066,MR1935594,MR94h:47068,Ki-Ma:83b,Ki:89,MR2000m:82029}
for precise statements.\vskip.2cm\noindent
In the present paper, we address the case when $V$ changes sign, i.e.,
there may exist $x_+\not=x_-$ such that
\begin{equation}
  \label{eq:17}  V(x_-)\cdot V(x_+)<0.
\end{equation}
The basic difficulty this property introduces is that the variations
of the potential $V_\omega$ as a function of $\omega$ are not
monotonous. In the monotonous case, to get the minimum, one can simply
minimize with respect to each of the random variables individually. In
the non monotonous case, this uncoupling between the different random
variables may fail. Our results concern reflection symmetric
potentials since, as we will see, for these potentials we also have an
analogous decoupling between the different random variables. Thus, we
make the following symmetry assumption on $V$:
\begin{description}
\item[(H2)] $V$ is reflection symmetric i.e. for any
  $\sigma=(\sigma_1,\dots,\sigma_d)\in\{0,1\}^d$ and any
  $x=(x_1,\dots,x_d)\in\R^d$,
  \begin{equation*}
    V(x_1,\dots,x_d)=V((-1)^{\sigma_1}x_1,\dots,(-1)^{\sigma_d}x_d).
  \end{equation*}
\end{description}
We now consider the operator $H_\lambda^N=-\Delta+\lambda V$ with
Neumann boundary conditions on the cube $[-1/2,1/2]^d$. Its spectrum
is discrete, and we let $E_-(\lambda)$ be its ground state energy. It
is a simple eigenvalue and $\lambda\mapsto E_-(\lambda)$ is a real
analytic concave function defined on $\R$.  We first observe:
\begin{Pro}
  \label{pro:1}
  Under the above assumptions (H1) and (H2),
  \begin{equation*}
    E_-=\inf(E_-(a),E_-(b)).
  \end{equation*}
\end{Pro}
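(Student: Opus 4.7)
The plan is to sandwich $E_-$ between $\inf(E_-(a),E_-(b))$ from below and from above.

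For the lower bound $E_-\geq\inf(E_-(a),E_-(b))$, I would use Neumann--Dirichlet bracketing along the partition $\R^d=\bigsqcup_{\gamma\in\Z^d}(\gamma+[-1/2,1/2]^d)$. Since $\supp V\subset(-1/2,1/2)^d$ by (H1), on the interior of each cell only the single translate $V(\cdot-\gamma)$ survives, so the Neumann restriction of $H_\omega$ to $\gamma+[-1/2,1/2]^d$ is unitarily equivalent (via translation) to $H^N_{\omega_\gamma}$, whose ground state energy is $E_-(\omega_\gamma)$. The variational formula $E_-(\lambda)=\inf_{\|\psi\|=1}\int_{[-1/2,1/2]^d}(|\nabla\psi|^2+\lambda V|\psi|^2)\,dx$ realises $\lambda\mapsto E_-(\lambda)$ as an infimum of affine functions, hence as a concave function of $\lambda\in\R$; so it attains its minimum on $[a,b]$ at an endpoint, and $E_-(\omega_\gamma)\geq\inf(E_-(a),E_-(b))$ almost surely for every $\gamma$. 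Bracketing then yields $H_\omega\geq\inf(E_-(a),E_-(b))$ a.s., and so $E_-\geq\inf(E_-(a),E_-(b))$.

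For the upper bound, I would show that $E_-(\lambda)\in\Sigma$ for $\lambda\in\{a,b\}$. By the standard description of the almost sure spectrum of an alloy-type operator as the union of $\sigma(H_\omega)$ over admissible configurations $\omega\in[a,b]^{\Z^d}$, the constant configuration $\omega\equiv\lambda$ gives $\sigma(H^{per}_\lambda)\subset\Sigma$, where $H^{per}_\lambda=-\Delta+\lambda\sum_\gamma V(\cdot-\gamma)$. I would then construct a bounded $\Z^d$-periodic generalised eigenfunction of $H^{per}_\lambda$ at energy $E_-(\lambda)$ as follows. Let $\psi_\lambda>0$ be the Neumann ground state of $H^N_\lambda$ on $[-1/2,1/2]^d$; uniqueness together with (H2) forces $\psi_\lambda$ to be invariant under each coordinate reflection $x_i\mapsto-x_i$. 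Extend $\psi_\lambda$ to $\R^d$ by repeated orthogonal reflections across the cube faces. The Neumann boundary condition makes the extension $\tilde\psi_\lambda$ of class $C^1$ across every face, and the symmetry of $\psi_\lambda$ makes $\tilde\psi_\lambda$ exactly $\Z^d$-periodic. On each cell $\gamma+(-1/2,1/2)^d$ the eigenvalue equation for $\psi_\lambda$ pushes forward, under the isometry defining $\tilde\psi_\lambda$ there, to $(-\Delta+\lambda V(\cdot-\gamma))\tilde\psi_\lambda=E_-(\lambda)\tilde\psi_\lambda$; here (H2) is used to identify the potential seen in the reflected frame with the original $V$. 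The $C^1$ matching then promotes the cellwise identity to the distributional identity $H^{per}_\lambda\tilde\psi_\lambda=E_-(\lambda)\tilde\psi_\lambda$ on $\R^d$, and since $\tilde\psi_\lambda$ is bounded, $E_-(\lambda)\in\sigma(H^{per}_\lambda)\subset\Sigma$.

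The step I expect to be the main obstacle is the verification of the reflection-extension construction, and in particular the precise identification of the potential acting on $\tilde\psi_\lambda$ in each cell. Without (H2) the reflected copy of $\psi_\lambda$ would live in a cell whose potential, viewed through the reflection, is not $V$ but a reflected version of it, so $\tilde\psi_\lambda$ would solve an equation with a different operator and the global construction would fail. Condition (H2) is exactly what allows one to reuse $\psi_\lambda$ in every cell, effectively restoring the cell-to-cell decoupling that the non-monotone dependence of $H_\omega$ on the $\omega_\gamma$ would otherwise prevent.
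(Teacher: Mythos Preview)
Your proof is correct and follows essentially the same approach as the paper: both arguments obtain the lower bound via concavity of $\lambda\mapsto E_-(\lambda)$ together with Neumann bracketing over the unit cubes, and the upper bound by using (H2) to extend the positive Neumann ground state of $H^N_\lambda$ by reflections to a $\Z^d$-periodic eigenfunction of the periodic operator at the constant configuration. The only cosmetic difference is that the paper routes the upper bound through a characterisation of $\Sigma$ in terms of the periodic box restrictions $H^P_{\omega,L}$ (their Lemma~1.1), whereas you invoke directly that a bounded generalised eigenfunction of $H^{per}_\lambda$ places $E_-(\lambda)$ in $\sigma(H^{per}_\lambda)\subset\Sigma$; the underlying construction is identical.
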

\noindent For $a$ and $b$ sufficiently small, this result was proven
in~\cite{MR2202533} without the assumption (H2) but with an additional
assumption on the sign of $\D\int_{\R^d}V(x)dx$. The method used by
Najar relies on a small coupling constant expansion for the infimum of
$\Sigma$.  These ideas were first used in~\cite{MR2335580} to treat
other non monotonous perturbations, in this case magnetic ones, of the
Laplace operator. In~\cite{BaLoSto:07}, the authors study the minimum
of the almost sure spectrum for a random displacement model i.e. the
random potential is defined as
$V_\omega(x)=\sum_{\gamma\in\Z^d}V(x-\gamma-\xi_\gamma)$ where
$(\xi_\gamma)_\gamma$ are i.i.d. random variables supported in a
sufficiently small compact.\\
We now turn to the results on Lifshitz tails. We denote by $N(E)$ the
integrated density of states of $H_\omega$, i.e., it is defined by the
limit
\begin{equation}
  \label{eq:4}
  N(E)=\lim_{L\to+\infty}\frac{\#\{\text{eigenvalues of
    }H_{\omega,L}^N\ \leq E\}}{(2L+1)^d}
\end{equation}
where $H_{\omega,L}^N$ is the operator $H_\omega$ restricted to the
cube $[-L-1/2,L+1/2]^d$ with Neumann boundary conditions. This limit
exists for a.s.\ $\omega$ and is independent of $\omega$; it has been
the object of a lot of studies and we refer
to~\cite{MR94h:47068,MR2051995,MR2378428} for extensive reviews.

We first give an upper bound on the integrated density of states. In
the applications of the Lifshitz tails asymptotics, in particular, to
localization, this side of the bound is the most important and also
the difficult one to obtain.
\begin{Th}
  \label{thr:2}
  Suppose assumptions (H1) and (H2) are satisfied. Assume moreover
  that
  \begin{equation}
    \label{eq:11}
    E_-(a)\not=E_-(b).
  \end{equation}
  Then
  \begin{equation}
    \label{eq:2}
    \limsup_{E\to E_-^+}\frac{\log|\log N(E)|}{\log(E-E_-)}\leq
    -\frac d2-\alpha_+
  \end{equation}
  where we have set $c=a$ if $E_-(a)<E_-(b)$ and $c=b$ if
  $E_-(a)>E_-(b)$, and
  \begin{equation*}
    \alpha_+=-\frac12\liminf_{\varepsilon\to0}\frac{\log|\log\mathbb{P}(\{
      |c-\omega_0|\leq\varepsilon\})|}{\log\varepsilon}\geq0.
  \end{equation*}
\end{Th}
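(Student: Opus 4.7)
The strategy is the classical Lifshitz tails scheme --- reduce the bound on $N(E)$ to a finite-volume ground-state estimate, then exploit the i.i.d.\ structure --- adapted to the non-monotonous setting by perturbing around the periodic reference configuration $\omega_\gamma\equiv c$. Throughout, set $\varepsilon:=E-E_-$.

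Standard Neumann bracketing gives, for every integer $L\ge 1$,
\[
N(E)\le (2L+1)^{-d}\,\mathbb{E}\bigl[\#\{\text{eigenvalues of }H_{\omega,L}^N\le E\}\bigr]\le C L^{d}\,\mathbb{P}\bigl(E_0(H_{\omega,L}^N)\le E\bigr),
\]
the second inequality following from a deterministic Weyl-type count. The relevant length scale is $L\sim\varepsilon^{-1/2}$, matching the kinetic scale on $\Lambda_L=[-L-1/2,L+1/2]^d$ to $\varepsilon$.

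The main step is a quantitative lower bound on $E_0(H_{\omega,L}^N)$. Let $u_c$ be the normalized periodic ground state of $H_{\overline c}=-\Delta+V_{\overline c}$ on $\R^d$, with eigenvalue $E_-(c)=E_-$ by Proposition~\ref{pro:1}; thanks to (H2), $u_c$ restricted to any unit cube is the Neumann ground state there, and more generally $u_c|_{\Lambda_L}$ is the Neumann ground state of $H_{\overline c,L}^N$. Decomposing $H_\omega=H_{\overline c}+W$ with $W=\sum_\gamma(\omega_\gamma-c)V(\cdot-\gamma)$, I would apply a Temple--Feshbach argument with $u_c|_{\Lambda_L}$ as test function, using the Floquet--Bloch spectral gap of $H_{\overline c}$ above $E_-$ together with $E_-'(c)>0$ (a consequence of concavity of $\lambda\mapsto E_-(\lambda)$ and (\ref{eq:11})) to produce
\[
E_0(H_{\omega,L}^N)-E_-\ \ge\ \kappa\,|\Lambda_L|^{-1}\sum_{\gamma\in\Lambda_L\cap\Z^d}(\omega_\gamma-c)^2\,-\,O(L^{-2})
\]
for some $\kappa>0$. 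The quadratic form on the right is the correct substitute for a monotonicity estimate: it is precisely what produces the exponent $\alpha_+$ (rather than $2\alpha_+$) in the final statement, via Chebyshev on second moments.

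The probabilistic step is then essentially combinatorial: $E_0(H_{\omega,L}^N)\le E$ and the above bound force $|\Lambda_L|^{-1}\sum_\gamma(\omega_\gamma-c)^2\le C\varepsilon$, so by Chebyshev at least $L^d/2$ of the $\omega_\gamma$ in $\Lambda_L$ satisfy $|\omega_\gamma-c|\le C\sqrt\varepsilon$, whence
\[
\mathbb{P}(\text{such concentration})\le\binom{L^d}{L^d/2}\,p\bigl(C\sqrt\varepsilon\bigr)^{L^d/2}\le\exp\bigl(-c\,L^d\,|\log p(\sqrt\varepsilon)|\bigr)
\]
with $p(\eta)=\mathbb{P}(|\omega_0-c|\le\eta)$. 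Inserting $L^d\sim\varepsilon^{-d/2}$ and the asymptotics $|\log p(\eta)|=\eta^{-2\alpha_++o(1)}$ coming from the very definition of $\alpha_+$ yields $|\log N(E)|\gtrsim\varepsilon^{-d/2-\alpha_++o(1)}$, and the claimed limsup follows. The principal technical obstacle is the quadratic lower bound of the previous paragraph: one must show that the off-diagonal Temple correction $\|W u_c\|^2/\mathrm{gap}$ is controlled by $\sum_\gamma(\omega_\gamma-c)^2$ uniformly in $\omega\in[a,b]^{\Lambda_L\cap\Z^d}$ --- with the Floquet--Bloch band width $\sim 1/L^2$ likely requiring a multi-scale decomposition into sub-cubes --- so that the linear first-order term (positive thanks to $E_-'(c)>0$) and the quadratic Temple correction combine into a clean lower bound for every admissible configuration.
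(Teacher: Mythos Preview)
Your proposal has a genuine gap at its central step, and the paper's proof takes a substantially different and cleaner route.

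\textbf{The quadratic lower bound is the wrong target.} Under~(\ref{eq:11}) with $c=a$, concavity gives $E_-'(a)>0$ (and $E_-'(b)<0$ if $c=b$; either way the first-order term below is nonnegative). Temple's inequality with trial state $u_c$ yields
\[
E_0(H^N_{\omega,L})-E_-\ \ge\ E_-'(c)\,|\Lambda_L|^{-1}\sum_\gamma(\omega_\gamma-c)\ -\ \frac{C}{\mathrm{gap}}\,|\Lambda_L|^{-1}\sum_\gamma(\omega_\gamma-c)^2,
\]
i.e.\ a \emph{positive linear} term \emph{minus} a quadratic correction. This does not produce your stated bound $\ge\kappa|\Lambda_L|^{-1}\sum(\omega_\gamma-c)^2$; a quadratic lower bound would be the natural object only in the degenerate case $E_-'(c)=0$, which is precisely what~(\ref{eq:11}) excludes. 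What one actually expects here is a \emph{linear} lower bound in $\sum(\omega_\gamma-c)$.

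\textbf{The gap obstruction is real and not resolved.} To extract even the linear bound from Temple on the full box $\Lambda_L$, you must control the correction by the linear term, which requires the spectral gap of $H^N_{\overline c,L}$ to be bounded below uniformly in $L$. It is not: the gap is of order $L^{-2}$, as you yourself note. Your proposed remedy (``multi-scale decomposition into sub-cubes'') is left entirely unspecified; without it the argument collapses.

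\textbf{What the paper does instead.} The paper bypasses Temple completely. It Neumann-decouples $\Lambda_L$ into unit cubes and proves, on a \emph{single} cube, the operator inequality
\[
C\bigl(H^N_{t,0}-E_-(a)\bigr)\ \ge\ \bigl(H^N_{a,0}-E_-(a)\bigr)+(t-a),\qquad t\in[a,b],
\]
via an elementary convexity argument (their Lemma~\ref{le:3}: for $H_0\ge0$, $\inf\sigma(H_0+V_1)>0$, one has $C(H_0+tV_1)\ge H_0+t$ by writing $H_0+tV_1$ as a convex combination of $H_0$ and $H_0+\beta V_1$ with $\beta>1$). Summing over unit cubes gives $C(H^N_{\omega,L}-E_-)\ge H^{m,N}_{\omega,L}$ with $H^m_\omega=H_{\overline a}-E_-+\sum_\gamma(\omega_\gamma-a)\mathbf{1}_{\gamma+[-1/2,1/2]^d}$, a \emph{monotonous} alloy-type model. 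Hence $N(E)\le N_m(C(E-E_-))$, and the Lifshitz upper bound follows from the standard sign-definite theory. The unit-cube scale makes the gap uniform, and the operator inequality replaces your Temple estimate by a statement that holds for \emph{every} $t\in[a,b]$ without any smallness assumption.

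In short: your Temple/Feshbach scheme does not close because the whole-box gap vanishes, and the quadratic lower bound you aim for is both unobtainable from that scheme and unnecessary. The paper's key idea---a one-line operator inequality on a single cube, yielding a linear comparison to a sign-definite model---is what you are missing.
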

\noindent As will be clear from the proofs, we could also consider the
model $H_\omega=H_0+V_\omega$ where $V_\omega$ is as above and
$H_0=-\Delta+W$ where $W$ is a $\Z^d$-periodic potential that
satisfies the symmetry assumption (H2).
\vskip.1cm\noindent We now study a lower bound for the integrated
density of states that we will prove in a more general case than the
upper bound, i.e., we don't need to assume (H2). The assumption is as
follows:
\begin{description}
\item[(HP)] there exists $\omega^P\in[a,b]^{\Z^d}$ that is periodic (,
  i.e., for some $L_0\in\N$, for all $\gamma\in\Z^d$ and
  $\beta\in\Z^d$, $\omega^P_{\gamma+L_0\beta}=\omega^P_\gamma$) such
  that $\inf\Sigma=\inf\sigma(H_{\omega^P})$.
\end{description}
Under this assumption, we have
\begin{Th}
  \label{thr:3}
  Let $H_\omega$ be defined as above, and assume (H1) and (HP) hold.
  Then
  \begin{equation}
    \label{eq:9}
    -\frac{d}2-\alpha_-\leq
    \liminf_{E\to E_-^+}\frac{\log|\log N(E)|}{\log(E-E_-)},
  \end{equation}
  where
  \begin{equation*}
    \alpha_-=-\liminf_{\varepsilon\to0}\frac{\log|\log\mathbb{P}(\{
      \forall\gamma\in\Z^d/L_0\Z^d;\ |\omega^P_\gamma-\omega_\gamma|
      \leq\varepsilon\})|}{\log\varepsilon}.
  \end{equation*}
\end{Th}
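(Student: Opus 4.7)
The plan is to combine Dirichlet bracketing with a probabilistic estimate that $\omega$ is uniformly close to $\omega^P$ on a large box. Set $\Lambda_L:=[-L-1/2,L+1/2]^d$ and choose $L=L_0K$ with $K\in\N$ the largest integer satisfying $L_0K\le c(E-E_-)^{-1/2}$, for a constant $c$ to be fixed below. Partitioning a large cube into translates of $\Lambda_L$, using Dirichlet bracketing, and invoking the ergodic theorem yields the standard lower bound
\begin{equation*}
  N(E)\ge \frac{1}{(2L+1)^d}\,\mathbb{P}\bigl(\lambda^D_1(H_\omega,\Lambda_L)\le E\bigr),
\end{equation*}
where $\lambda^D_1(H_\omega,\Lambda_L)$ is the first Dirichlet eigenvalue of $H_\omega$ on $\Lambda_L$; it is therefore enough to exhibit likely events on which $\lambda^D_1(H_\omega,\Lambda_L)\le E$.

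The two key steps are a deterministic quasi-mode estimate and a perturbative estimate. For the first, under (HP) $H_{\omega^P}$ is $L_0\Z^d$-periodic with bottom of spectrum $E_-$, attained by Floquet--Bloch theory on the lowest band at some quasimomentum $\theta_0$ by a Bloch function $\phi_{\theta_0}$; multiplying $\phi_{\theta_0}$ by a smooth cutoff $\chi(\cdot/L)\in C_c^\infty(\Lambda_L)$ gives a valid Dirichlet trial function, and a standard IMS commutator computation produces Rayleigh quotient $E_-+C_1/L^2$, so $\lambda^D_1(H_{\omega^P},\Lambda_L)\le E_-+C_1/L^2$. For the second, introduce
\begin{equation*}
  \Omega_L(\varepsilon):=\bigl\{|\omega_\gamma-\omega^P_\gamma|\le\varepsilon\text{ for all }\gamma\in\Lambda_{L+1}\cap\Z^d\bigr\};
\end{equation*}
on this event, $V_\omega-V_{\omega^P}=\sum_\gamma(\omega_\gamma-\omega^P_\gamma)V(\cdot-\gamma)$ is form-bounded relative to $H_{\omega^P}$ on $\Lambda_L$ with form bound at most $C_2\varepsilon$ (using the $L^p$ assumption in (H1) and Sobolev embedding), hence $\lambda^D_1(H_\omega,\Lambda_L)\le E_-+C_1/L^2+C_2\varepsilon$. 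Taking $c^2=4C_1$ and $\varepsilon=(E-E_-)/(2C_2)$ makes the right-hand side at most $E$.

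It remains to estimate $\mathbb{P}(\Omega_L(\varepsilon))$. Independence of the $(\omega_\gamma)$ and $L_0$-periodicity of $\omega^P$ give $\mathbb{P}(\Omega_L(\varepsilon))\ge p(\varepsilon)^{(2K+1)^d}$ with $p(\varepsilon):=\mathbb{P}(\forall\gamma\in\Z^d/L_0\Z^d;\ |\omega^P_\gamma-\omega_\gamma|\le\varepsilon)$, and by the very definition of $\alpha_-$, for any $\delta>0$ and $\varepsilon$ sufficiently small, $|\log p(\varepsilon)|\le\varepsilon^{-\alpha_--\delta}$. Combining these ingredients,
\begin{equation*}
  |\log N(E)|\le C(E-E_-)^{-d/2-\alpha_--\delta}+O\bigl(|\log(E-E_-)|\bigr),
\end{equation*}
and dividing $\log|\log N(E)|$ by $\log(E-E_-)<0$, then letting $\delta\to 0^+$, gives the claim. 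The main technical obstacle is the $O(1/L^2)$ quasi-mode estimate: when $\theta_0=0$ and $\phi_{\theta_0}$ is the periodic ground state this is a routine IMS computation, but in general $\phi_{\theta_0}$ is only quasi-periodic and complex valued, so the commutator estimate with $\chi(\cdot/L)$ must be carried out carefully (alternatively one decomposes $\chi(\cdot/L)\phi_{\theta_0}$ into real and imaginary parts and applies the min-max principle on the resulting two-dimensional subspace).
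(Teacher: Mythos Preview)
Your argument is correct and reaches the same conclusion, but the route differs from the paper's. The paper does not use Dirichlet bracketing directly; instead it invokes a two-sided comparison (its Theorem~\ref{thr:4}, taken from~\cite{MR2003c:82037}) between $N(E)$ and the expected IDS of periodic approximations $H_{\omega,n}$, and then produces, for \emph{every} quasi-momentum $\theta$, an approximate eigenvector $w_{\varepsilon,\theta}$ of the Floquet fiber $H_{\omega,n,\theta}$ via a norm (not just form) quasi-mode estimate (its Lemma~\ref{le:4}). Your approach---Dirichlet bracketing plus a single trial function $\chi(\cdot/L)\phi_{\theta_0}$---is more elementary and avoids the periodic-approximation machinery entirely; the paper's framework, by contrast, is set up so that the same Theorem~\ref{thr:4} also feeds the upper bound and other arguments in~\cite{MR2003c:82037}. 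One remark on the ``technical obstacle'' you flag at the end: for a Schr\"odinger operator $-\Delta+W$ with real periodic scalar potential (no magnetic field), the bottom of the spectrum is always attained at $\theta_0=0$ with a strictly positive, real, periodic Bloch function (Perron--Frobenius / Harnack); so the complex-valued quasi-periodic case never arises here, and your IMS computation is routine.
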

\noindent Assume now that (H1) and (H2) hold, and hence, (HP) holds
with $\omega^P=\overline{a}$ or $\omega^P=\overline{b}$. Indeed, as we
will see in the proof of Proposition~\ref{pro:1} in
Section~\ref{sec:determ-bott-spectr}, under assumption (H2), $E_-(a)$
is also the bottom of the spectrum of the periodic operator
$H_{\overline{a}}$.  If we assume that $\alpha_+=\alpha_-=0$ and
$E_-(a)\not= E_-(b)$, we obtain the following corollary:
\begin{Th}
  \label{thr:5}
  Assume that (H1) and (H2) and~\eqref{eq:11} hold and that
  $\alpha_-=\alpha_+=0$ then
  \begin{equation*}
    \lim_{E\to E_-^+}\frac{\log|\log N(E)|}{\log(E-E_-)}=-\frac{d}2.
  \end{equation*}
\end{Th}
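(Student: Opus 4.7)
The plan is to obtain the asserted equality by sandwiching the $\limsup$ and $\liminf$ through a direct application of Theorems~\ref{thr:2} and~\ref{thr:3}; the only non-trivial step is to verify that, under (H1) and (H2), hypothesis (HP) is automatically satisfied so that Theorem~\ref{thr:3} may be invoked.

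First, I would use Proposition~\ref{pro:1} together with~\eqref{eq:11} to select $c\in\{a,b\}$ unambiguously so that $E_-=E_-(c)=\min(E_-(a),E_-(b))$. With this choice of $c$, Theorem~\ref{thr:2} applies verbatim under (H1), (H2) and~\eqref{eq:11}; the standing hypothesis $\alpha_+=0$ kills the correction term and delivers
\[
\limsup_{E\to E_-^+}\frac{\log|\log N(E)|}{\log(E-E_-)}\leq -\frac{d}{2}.
\]

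Next, to invoke Theorem~\ref{thr:3} I need to exhibit a periodic $\omega^P$ that realises $\inf\Sigma$. As announced in the paragraph preceding Theorem~\ref{thr:5}, the symmetry assumption (H2) implies, via the argument used to prove Proposition~\ref{pro:1} in Section~\ref{sec:determ-bott-spectr}, that $E_-(c)=\inf\sigma(H_{\overline{c}})$, where $\overline{c}=(c)_{\gamma\in\Z^d}$ is the constant configuration. Hence (HP) holds with $\omega^P=\overline{c}$ and $L_0=1$, so that $\Z^d/L_0\Z^d$ is a single point and the exponent in Theorem~\ref{thr:3} simplifies to
\[
\alpha_-=-\liminf_{\varepsilon\to 0}\frac{\log|\log\mathbb{P}(\{|c-\omega_0|\leq\varepsilon\})|}{\log\varepsilon}=2\alpha_+;
\]
the assumed $\alpha_-=0$ is then consistent with (indeed equivalent to) $\alpha_+=0$. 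Theorem~\ref{thr:3} therefore yields the matching lower bound
\[
-\frac{d}{2}\leq \liminf_{E\to E_-^+}\frac{\log|\log N(E)|}{\log(E-E_-)},
\]
and combining the two inequalities gives the announced limit.

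The only step that genuinely requires content beyond quoting previous theorems is the verification of (HP) with $\omega^P=\overline{c}$, but this identification is the symmetry-based argument underlying Proposition~\ref{pro:1}. Once that ingredient is available, I anticipate no additional obstacle, and Theorem~\ref{thr:5} follows as a clean corollary of the three results stated above.
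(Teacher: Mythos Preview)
Your proposal is correct and follows exactly the approach the paper intends: Theorem~\ref{thr:5} is stated as an immediate corollary of Theorems~\ref{thr:2} and~\ref{thr:3}, with (HP) being verified for $\omega^P=\overline{c}$ via the reflection-symmetry argument underlying Proposition~\ref{pro:1}. Your observation that with $L_0=1$ one has $\alpha_-=2\alpha_+$ is a nice consistency check, though the paper simply assumes both vanish without remarking on their relation.
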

\noindent Combining Theorem~\ref{thr:2} with the Wegner estimates
obtained in~\cite{Kl:95b,Hi-Kl:01a} and the multiscale analysis as
developed in~\cite{MR2002m:82035}, we learn
\begin{Th}
  \label{thr:1}
  Assume (H1), (H2) and~\eqref{eq:11} hold. Assume, morevoer, that the
  common distribution of the random variables admits an absolutely
  continuous density. Then, the bottom edge of the spectrum of
  $H_\omega$ exhibits complete localization in the sense
  of~\cite{MR2002m:82035}.
\end{Th}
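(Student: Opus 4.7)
The plan is to apply the bootstrap multiscale analysis developed by Germinet and Klein in \cite{MR2002m:82035}, which yields complete localization on an interval $[E_-,E_-+\eta]$ as soon as two ingredients are available at the bottom edge of the spectrum: a Wegner-type estimate and an initial length scale estimate. The proof thus reduces to checking that both hypotheses hold in our setting.

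First, since (H1), (H2) are in force and the common distribution of the $\omega_\gamma$ admits an absolutely continuous density, the Wegner estimates of \cite{Kl:95b,Hi-Kl:01a} apply and provide, in a neighborhood of $E_-$, a bound of the form
\begin{equation*}
\mathbb{P}\bigl(\dist(\sigma(H_{\omega,L}^N),E)\leq\varepsilon\bigr)\leq C\,\varepsilon^{\alpha}\,L^d
\end{equation*}
for some $\alpha\in(0,1]$, which is precisely the input required by the multiscale analysis of \cite{MR2002m:82035}. The essential point here is that these Wegner estimates were designed to cover the non-monotonous alloy-type case, replacing the standard spectral averaging argument, which fails when $V$ changes sign, by more refined variational manipulations of the random parameters.

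Next, I would derive the initial length scale estimate from Theorem \ref{thr:2}. The Lifshitz tail asymptotics \eqref{eq:2} (together with $\alpha_+\geq 0$) yield in particular super-polynomial decay of the integrated density of states at $E_-$, namely $N(E_-+\eta)\leq\exp(-\eta^{-d/2+o(1)})$ as $\eta\to 0^+$. By Chebyshev's inequality applied to the eigenvalue counting function of $H_{\omega,L}^N$ on the cube of side $L$, this translates into
\begin{equation*}
\mathbb{P}\bigl(\sigma(H_{\omega,L}^N)\cap[E_-,E_-+L^{-2}]\neq\emptyset\bigr)\leq L^{-q}
\end{equation*}
for every fixed $q>0$ and all $L$ large enough. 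Combining this with a standard Combes--Thomas estimate for the resolvent at an energy $E\in(E_-,E_-+L^{-2}/2)$ produces the polynomial off-diagonal decay of $(H_{\omega,L}^N-E)^{-1}$ that constitutes the initial length scale hypothesis of \cite{MR2002m:82035}.

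The bootstrap multiscale induction of \cite{MR2002m:82035} then feeds both estimates into its scheme and yields complete localization on some non-empty interval $[E_-,E_-+\eta]$, which is the conclusion of the theorem. The only step that is genuinely specific to the non-monotonous setting is the Wegner estimate, and this is the main conceptual obstacle; once \cite{Kl:95b,Hi-Kl:01a} are accepted as black boxes, the passage from Lifshitz tails to the initial length scale estimate and the subsequent multiscale bootstrap are entirely standard.
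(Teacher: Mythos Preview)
Your proposal is correct and follows exactly the route the paper indicates: the paper's own ``proof'' is the single sentence preceding the theorem, namely that Theorem~\ref{thr:2} (the Lifshitz upper bound) combined with the Wegner estimates of \cite{Kl:95b,Hi-Kl:01a} and the multiscale analysis of \cite{MR2002m:82035} yields the result. You have simply spelled out the standard passage from Lifshitz tails to the initial length scale estimate via Chebyshev and Combes--Thomas, which the paper leaves implicit.
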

\noindent Lifshitz tail have already been proved for various non
monotonous random models, mainly models with a random magnetic fields
(see e.g.~\cite{MR2335580,MR2015430,MR1806979,MR1800861}). In the
models we consider, we will now see that Lifshitz tails do not always
appear.
\vskip.1cm\noindent In a companion paper (see~\cite{FN:08}), we study
the case when $E_-(a)=E_-(b)$. This requires techniques different from
the ones used in the present paper and gets particularly interesting
when the random variables are Bernoulli distributed. However, it is
quite easy to see that, when $E_-(a)=E_-(b)$, Lifshitz tails may fail;
the density of states can even exhibit a van Hove singularity.
\begin{Th}
  \label{thr:7}
  There exists potentials $V$ and random variables
  $(\omega_\gamma)_\gamma$ sa\-tisfying (H1) and (H2) such that
  \begin{itemize}
  \item $E_-(a)=E_-(b)=0$
  \item there exists $C>0$ such that, for $E\geq 0$, one has
    \begin{equation*}
      \frac1C E^{d/2}\leq N(E)\leq C E^{d/2}.
    \end{equation*}
  \end{itemize}
\end{Th}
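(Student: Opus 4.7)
The plan is to build an explicit pair $(V,(\omega_\gamma)_\gamma)$ for which $H_\omega$ is, uniformly in $\omega$, unitarily conjugate at energy zero to a mild perturbation of $-\Delta$, so that $N(E)$ inherits the Weyl behaviour $\sim E^{d/2}$ of the free Laplacian. I would first fix a smooth, strictly positive, reflection symmetric function $\psi$ on $[-1/2,1/2]^d$ that equals $1$ together with all derivatives in a neighborhood of $\partial[-1/2,1/2]^d$ and is not identically $1$; a concrete choice is
\begin{equation*}
\psi(x)=\prod_{j=1}^d\bigl(1+\epsilon\phi(x_j)\bigr),
\end{equation*}
with $\phi\in C^\infty_c((-1/3,1/3))$ even, nonnegative, not identically zero, and $\epsilon>0$ small enough that $\psi>0$. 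Then set $V=\Delta\psi/\psi$ inside $[-1/2,1/2]^d$ and extend by zero: $V$ is smooth, reflection symmetric and compactly supported in $(-1/2,1/2)^d$, so (H1) and (H2) hold. Since $\int V\psi\,dx=\int\Delta\psi\,dx=0$ by Neumann BC, and $\psi>0$, the potential $V$ automatically changes sign. I would then take $(\omega_\gamma)_\gamma$ i.i.d.\ Bernoulli with values in $\{0,1\}$, so that $a=0$ and $b=1$.

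To check $E_-(a)=E_-(b)=0$ it suffices to note that $E_-(0)=0$ (the constant $1$ is the Neumann ground state of $-\Delta$ on the unit cube) and that $E_-(1)=0$ because, by construction, $\psi>0$ is a Neumann solution of $(-\Delta+V)\psi=0$ and hence the simple ground state. To obtain $E_-=0$ almost surely, I would build a bounded positive generalized ground state for $H_\omega$ on $\R^d$ by gluing: for $x\in C_\gamma:=\gamma+[-1/2,1/2]^d$, set $\psi_\omega(x)=\psi(x-\gamma)$ if $\omega_\gamma=1$ and $\psi_\omega(x)=1$ if $\omega_\gamma=0$. Because $\psi$ agrees with the constant $1$ to infinite order near $\partial[-1/2,1/2]^d$, $\psi_\omega$ is smooth on $\R^d$, satisfies $0<c\le\psi_\omega(x)\le C<\infty$ with deterministic constants independent of $(\omega,x)$, and solves $(-\Delta+V_\omega)\psi_\omega=0$ classically. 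The quadratic form identity
\begin{equation*}
\langle f,H_\omega f\rangle=\int_{\R^d}\psi_\omega^2\bigabs{\nabla(f/\psi_\omega)}^2\,dx,\qquad f\in C^\infty_c(\R^d),
\end{equation*}
then yields $H_\omega\ge 0$; combined with $0\in\Sigma$ (which one can see from boxes on which $\omega$ vanishes identically, an event of positive probability), this gives $E_-=0$.

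The final step is to extract the two-sided bound on $N(E)$ from this ground state transform applied to the finite-volume operators $H^N_{\omega,L}$ on $C_L=[-L-1/2,L+1/2]^d$. Since $\partial C_L$ sits entirely in the region $\{\psi_\omega\equiv 1\}$, the map $f\mapsto g=f/\psi_\omega$ is a bijection of $H^1(C_L)$ preserving the Neumann boundary condition, and one has
\begin{equation*}
\langle f,H^N_{\omega,L}f\rangle=\int_{C_L}\psi_\omega^2|\nabla g|^2\,dx,\qquad\|f\|^2_{L^2(C_L)}=\int_{C_L}\psi_\omega^2|g|^2\,dx.
\end{equation*}
Using $c\le\psi_\omega\le C$ and the min-max principle, this gives, uniformly in $\omega$ and $L$,
\begin{equation*}
(c/C)^2\mu_n\bigl(-\Delta_{C_L}^N\bigr)\le\mu_n\bigl(H^N_{\omega,L}\bigr)\le(C/c)^2\mu_n\bigl(-\Delta_{C_L}^N\bigr),
\end{equation*}
for every $n$, where $\mu_n$ denotes the $n$-th Neumann eigenvalue. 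Dividing the associated counting functions by $(2L+1)^d$, letting $L\to+\infty$, and using the standard Weyl asymptotic $N_{-\Delta}(E)=c_dE_+^{d/2}$, one obtains constants $C_1,C_2>0$ such that $C_1E^{d/2}\le N(E)\le C_2E^{d/2}$ for all $E\ge 0$, as required. The only real subtlety, which dictates the whole construction, is forcing $\psi\equiv 1$ near the cube boundary so that $\psi_\omega$ glues smoothly across $\Z^d$-translates \emph{and} the ground state transform is compatible with the Neumann boundary conditions used to define $N(E)$; once this flat-plateau condition is arranged, everything else reduces to min-max and the free Weyl law.
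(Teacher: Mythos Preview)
Your proposal is correct. The construction of $V=\Delta\psi/\psi$ with $\psi$ a smooth, positive, reflection symmetric function equal to a constant near the cube boundary, together with Bernoulli $\omega_\gamma\in\{0,1\}$, is exactly the example the paper builds. The verification of $E_-(0)=E_-(1)=0$ and of (H1)--(H2) is the same.

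Where you diverge from the paper is in the derivation of the two-sided bound $C^{-1}E^{d/2}\le N(E)\le CE^{d/2}$. The paper treats the two directions separately: for the upper bound it invokes the Kirsch--Simon gap comparison theorem to show that the second Neumann eigenvalue of $H^N_{\omega,L}$ is bounded below by $cL^{-2}$ uniformly in $\omega$, then optimizes over $L$; for the lower bound it passes to Dirichlet bracketing and tests against the product of the glued ground state with the first Dirichlet eigenfunction of the Laplacian. Your route is more direct and unified: the ground state substitution $f\mapsto f/\psi_\omega$ rewrites the Rayleigh quotient of $H^N_{\omega,L}$ as the weighted quotient $\int\psi_\omega^2|\nabla g|^2/\int\psi_\omega^2|g|^2$, and the uniform bounds $0<c\le\psi_\omega\le C$ then give, via min-max, a two-sided comparison of \emph{all} eigenvalues with those of the free Neumann Laplacian; the Weyl law finishes. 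This avoids the external reference to the gap comparison theorem and the separate Dirichlet argument, at the price of checking carefully that the flat plateau $\psi\equiv 1$ near $\partial[-1/2,1/2]^d$ kills the boundary terms in the ground state identity on $C_L$ --- which you do note. Both approaches ultimately exploit the same structural fact (a uniformly bounded positive generalized zero-energy solution), but yours packages it more economically.
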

\noindent This paper is constructed as follows. In
Section~\ref{sec:determ-bott-spectr}, we determine the bottom of the
almost sure spectrum, and prove Proposition~\ref{pro:1}. In
Section~\ref{sec:lifshitz-tails}, we prove our main theorems,
Theorem~\ref{thr:3} and Theorem~\ref{thr:1}. Finally, in
section~\ref{sec:an-example-where}, we prove Theorem~\ref{thr:7}.
%
%
\vskip.2cm\noindent\textbf{Acknowledgment.} FK would like to thank the
University of Tokyo where part of this work was done. A part of this
research was done when SN was invited to Univ.\ Paris 13 in 2007, and
he would like to thank it. SN is partially supported by JSPS Research
Grant, Kiban (B) 17340033.


\section{Determining the bottom of the spectrum}
\label{sec:determ-bott-spectr}
We denote by $t\mapsto E_-(t)$ the ground state energy of the operator
$H_{t,0}^N$, i.e., $-\Delta+tV$ on $[-1/2,1/2]^d$ with Neumann
boundary conditions.
We first note that $E_-(t)$ is a concave function of $t$ as, by the
variational principle, it is the infimum of a family of affine
functions of $t$.
Hence, for $t\in[a,b]$, we have $E_-(t)\geq\min(E_-(a),E_-(b))$. Then,
partitioning $\R^d$ into the cubes $\gamma+[-1/2,1/2]^d$ for
$\gamma\in\Z^d$, and, restricting $H_\omega$ to each of these cubes
with Neumann boundary conditions, we obtain
\begin{equation*}
  H_\omega\geq\bigoplus_{\gamma\in\Z^d}H_{\omega_\gamma,0}^N
\end{equation*}
So, we learn
\begin{equation*}
  H_\omega\geq\min(E_-(a),E_-(b)).
\end{equation*}
We let $L\geq1$, and consider $H_{\omega,L}^P$, the operator
$H_\omega$ restricted to the cube $[-L-1/2,L+1/2]^d$ with periodic
boundary conditions. Clearly, this operator depends only on finitely
many random variables. We prove
\begin{Le}
  \label{le:1}
  \begin{equation*}
    \Sigma=\overline{\bigcup_{L\geq1}\bigcup_{\omega\text{
          admissible}}\sigma(H_{\omega,L}^P)}, 
  \end{equation*}
  where $\omega$ is called admissible if all the components of
  $\omega$ are in the support of the distribution of the random
  variables defining the alloy type random operator.
\end{Le}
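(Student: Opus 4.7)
The plan is to establish the two inclusions separately, following the classical Kirsch--Martinelli/Pastur type argument for ergodic Schr\"odinger operators, adapted here to finite-volume periodic approximations.

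For the inclusion $\supseteq$: fix $L \geq 1$, an admissible $\omega_0 \in [a,b]^{\Z^d \cap [-L-1/2, L+1/2]^d}$ and $\lambda \in \sigma(H_{\omega_0,L}^P)$. Let $\psi_0$ be a normalized eigenfunction and lift it to the $(2L+1)\Z^d$-periodic function $\tilde\psi$ on $\R^d$, which satisfies $H_{\omega_0^{\text{per}}}\tilde\psi = \lambda\tilde\psi$ on $\R^d$, where $\omega_0^{\text{per}}$ is the $(2L+1)\Z^d$-periodic extension of $\omega_0$ to all of $\Z^d$. Multiply $\tilde\psi$ by a smooth cutoff $\chi_R$ equal to $1$ on $B(0,R)$, supported in $B(0,2R)$, with $\|\nabla\chi_R\|_\infty \leq C/R$. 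A standard commutator computation gives $\|(H_{\omega_0^{\text{per}}} - \lambda)(\chi_R\tilde\psi)\|_{L^2} \leq C R^{d/2-1}$, while $\|\chi_R\tilde\psi\|_{L^2} \geq c R^{d/2}$, so the relative residual is of order $1/R$.

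Next, since each entry of $\omega_0^{\text{per}}$ lies in the support of the single-site distribution, for every $\varepsilon > 0$ the event $E_{k,R,\varepsilon} = \{|\omega_{k+\gamma} - \omega_0^{\text{per}}(\gamma)| \leq \varepsilon \text{ for all } \gamma \in \Z^d \cap B(0,3R)\}$ has positive probability. Choose a sequence $(k_n) \subset (2L+1)\Z^d$ of translates with $|k_n - k_m| > 10R$ for $n \neq m$; the events $E_{k_n,R,\varepsilon}$ are then independent, so by Borel--Cantelli infinitely many occur almost surely. On such an occurrence, the translated function $\tau_{k_n}(\chi_R\tilde\psi)$ is an approximate eigenfunction of $H_\omega$ at energy $\lambda$: the commutator contributes relative error $O(1/R)$, and the mismatch term $\sum_\gamma(\omega_{k_n+\gamma}-\omega_0^{\text{per}}(\gamma))V(\cdot-k_n-\gamma)$ contributes $O(\varepsilon)$ once the $L^p$-bound on $V$ from (H1) is used to obtain relative boundedness with respect to $-\Delta$. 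Letting $R \to \infty$ and $\varepsilon \to 0$ produces a Weyl sequence, so $\lambda \in \sigma(H_\omega) = \Sigma$ almost surely.

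For the inclusion $\subseteq$: almost surely every $\omega_\gamma$ lies in the support of the single-site distribution, so almost surely $\omega$ is admissible. Fix such an $\omega$ with $\sigma(H_\omega) = \Sigma$ and take $\lambda \in \Sigma$. By density one can choose a Weyl sequence $(\psi_n) \subset C_c^\infty(\R^d)$ with $\|\psi_n\|=1$ and $\|(H_\omega - \lambda)\psi_n\| \to 0$. Pick $L_n$ large enough that $\supp \psi_n$ is strictly interior to $[-L_n-1/2, L_n+1/2]^d$; then $\psi_n$ lies in the domain of $H_{\omega,L_n}^P$ and $(H_{\omega,L_n}^P - \lambda)\psi_n = (H_\omega - \lambda)\psi_n$, which forces $\dist(\lambda, \sigma(H_{\omega,L_n}^P)) \to 0$. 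Since the restriction of $\omega$ to the cube is admissible, $\lambda$ lies in the closure on the right-hand side. The hard part is the $\supseteq$ direction, where one must simultaneously control two competing errors: the cutoff commutator demands $R$ large, while the potential mismatch requires the configuration event to hold on a ball of radius comparable to $R$; arranging independent translates along the sublattice $(2L+1)\Z^d$ with enough separation is what makes Borel--Cantelli applicable, and the $L^p$-integrability of $V$ from (H1) is what quantifies the potential perturbation in $L^2$.
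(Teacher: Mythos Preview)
Your argument is correct, but it proceeds quite differently from the paper's. The paper does not build Weyl sequences directly. Instead, it invokes the standard characterization $\Sigma=\overline{\bigcup_{L}\bigcup_{\omega\ \text{adm.}}\sigma(H_{\omega,L})}$ where $H_{\omega,L}$ is the $(2L+1)\Z^d$-periodic operator acting on all of $\R^d$, and then compares $\sigma(H_{\omega,L})$ with $\sigma(H^P_{\omega,L})$ via Floquet theory: the inclusion $\sigma(H^P_{\omega,L})\subset\sigma(H_{\omega,L})$ is immediate (periodic eigenfunctions are Bloch functions at quasi-momentum $0$), while the reverse containment in the limit follows because the spectrum of $H^P_{\omega^L,nL}$ consists of the Floquet eigenvalues of $H_{\omega,L}$ sampled at the rational quasi-momenta $2\pi\gamma/n$, which become dense as $n\to\infty$ and the Floquet eigenvalues are continuous. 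Your approach, by contrast, is self-contained: you redo the Kirsch--Martinelli construction (cutoff plus Borel--Cantelli) for the inclusion $\supseteq$, and for $\subseteq$ you exploit that $C_c^\infty$ is a core, so a compactly supported approximate eigenfunction is equally good for $H_\omega$ and for $H^P_{\omega,L}$ once $L$ is large. The paper's route is shorter if one is willing to quote the periodic characterization and basic Floquet facts; yours requires no external input beyond (H1) and standard operator theory, and in particular never mentions Floquet eigenvalues.
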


This lemma is a variant of a standard characterization of the almost
sure spectrum of an alloy type model.  To prove
Proposition~\ref{pro:1}, i.e., that $E_-=\min(E_-(a),E_-(b))$, it is
hence enough to prove that, for any $L$ sufficiently large,
\begin{equation}
  \label{eq:5}
  \inf_{\omega\in[a,b]^{C_L^d}}\inf\sigma(H_{\omega,L}^P)\leq\min(E_-(a),E_-(b))
\end{equation}
where $C_L^d=\Z^d\cap[-L-1/2,L+1/2]^d$.  To prove~\eqref{eq:5}, we
will use the assumption (H.2). For the sake of definiteness, let us
assume $E_-(a)\leq E_-(b)$.

The ground state of $H_{a,0}^N$, say $\psi$, is simple and can be
chosen uniquely as a normalized positive function. The reflection
symmetry of the potential $V$ guarantees that $\psi$ is reflection
symmetric. For $\gamma\in\Z^d$ such that
$|\gamma|=|\gamma_1|+\cdots+|\gamma_d|=1$, we can continue $\psi$ to
the $\gamma+[-1/2,1/2]^d$ by reflection symmetry with respect to the
common boundary of $[-1/2,1/2]^d$ and $\gamma+[-1/2,1/2]^d$. As $\psi$
is reflection symmetric, we continue this process of reflection with
respect to the boundaries of the new cubes to obtain a continuation of
$\psi$ that is $\Z^d$-periodic, positive and reflection symmetric with
respect to any plane that is common boundary to two cubes of the form
$\gamma+[-1/2,1/2]^d$. Moreover, $\psi$ satisfies, for any $L\geq0$,
$H_{\overline{a},L}^P\psi=H_{a,0}^P\psi=H_{a,0}^N\psi=E_-(a)\psi$.
This proves that $E_-(a)\geq \inf\sigma(H_{a,L}^P)$.
Hence,~\eqref{eq:5} holds.  This completes the proof of
Proposition~\ref{pro:1}.\qed
\begin{proof}[Proof of Lemma~\ref{le:1}]
  \label{sec:proof-lemma}
  Recall a well known characterization of the almost sure spectrum of
  an alloy type model in terms of periodic approximations (see
  e.g.~\cite{MR94h:47068,Ki:89}). Therefore, for $L\geq1$, define the
  $L\Z^d$-periodic operator
  \begin{equation}
    \label{eq:10}
    H_{\omega,L}=-\Delta+V_{\omega,L},\quad V_{\omega,L}(\cdot)=\sum_{\beta\in
      L\Z^d} \sum_{\gamma\in\Z^d/(L\Z^d)} \omega_\gamma
    V(\cdot-\beta-\gamma).
  \end{equation}
  Then, one has
  \begin{equation*}
    \Sigma=\overline{\bigcup_{L\geq1}\bigcup_{(\omega_\gamma)_{\gamma\in\Z^d/(L\Z^d)}
        \text{ admissible}}\sigma(H_{\omega,L})}
  \end{equation*}
  where $(\omega_\gamma)_{\gamma\in\Z^d/(L\Z^d)}$ is admissible if all
  its components belong to the support of the random variables
  defining
  the alloy type model. \\
  Floquet theory (see e.g.~\cite{MR58:12429c}) guarantees that
  $\sigma(H^P_{\omega,L})\subset \sigma(H_{\omega,L})$. So, in order
  to prove Lemma~\ref{le:1}, it is sufficient to prove that
  \begin{equation}
    \label{eq:6}
    \sigma(H_{\omega,L})\subset \overline{\bigcup_{n\geq1}\sigma(H^P_{\omega^L,nL})}
  \end{equation}
  for some well chosen admissible $\omega^L$.

  Consider $\omega^L$ defined by
  $\omega^L_{\gamma+L\beta}=\omega_\gamma$ for $\gamma\in\Z^d/(L\Z^d)$
  and $\beta\in\Z^d$. Clearly, $V_{\omega^L}=V_{\omega,L}$; hence, if
  $E_n(\theta)$ are the Floquet eigenvalues of $H_{\omega,L}$, the
  spectrum of the operator $H^P_{\omega^L,nL}$ is the set
  $\{E_n(2\pi\gamma/n)$; $\gamma\in\Z^d/(nL\Z^d)\}$ (see
  e.g.\cite{MR2003k:82049}). The inclusion \eqref{eq:6} follows from
  the continuity of the Floquet eigenvalues as function of the Floquet
  parameter (see e.g Lemma 7.1 in~ \cite{MR1979772}).
\end{proof}
%

\section{Lifshitz tails}
\label{sec:lifshitz-tails}
To fix ideas let us assume $E_-(a)< E_-(b)$. The two bounds in
Theorem~\ref{thr:2} and Theorem~\ref{thr:3} will be proved separately.
\subsection{The upper bound}
\label{sec:upper-bound}
The upper bound on the integrated density of states,
Theorem~\ref{thr:2}, will be an immediate consequence of the following
result.
\begin{Th}
  \label{thr:6}
  Suppose assumptions (H1) and (H2) are satisfied, and, that
  $E_-(a)<E_-(b)$. Then, there exists $c>0$ such that, for $E\geq
  E_-(a)$, one has
  \begin{equation}
    \label{eq:3}
    N(E)\leq N_m(C(E-E_-(a)))
  \end{equation}
  where $N_m$ is the integrated density of states of the random
  operator
  \begin{equation}
    \label{eq:12}
    H^m_{\omega}=H_{\overline{a}}-E_-(a)+
    \sum_{\gamma\in\Z^d}(\omega_\gamma-a)\car_{[-1/2,1/2]^d}(x-\gamma)
  \end{equation}
  and $H_{\overline{a}}$ is defined above.
\end{Th}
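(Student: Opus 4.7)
The claimed bound $N(E) \le N_m(C(E-E_-(a)))$ reduces, via the min-max principle applied to Neumann restrictions on large boxes and the infinite-volume limit, to the sample-wise operator inequality
\[
H_\omega - E_-(a)\ \ge\ C^{-1}\,H^m_\omega \qquad\text{on }L^2(\R^d),
\]
for a single constant $C>0$ independent of $\omega$; indeed, such an inequality gives $N(H_\omega;E)\le N(H^m_\omega;C(E-E_-(a)))$ at the level of eigenvalue counting on each box, and division by the volume plus $L\to\infty$ yields \eqref{eq:3}. I therefore focus on proving this operator inequality. Two inputs from Section~\ref{sec:determ-bott-spectr} are used: the concavity of $t\mapsto E_-(t)$ and the reflection-symmetric, positive, $\Z^d$-periodic extension $\Psi$ of $\psi_a$ built in the proof of Proposition~\ref{pro:1}, which satisfies $H_{\overline a}\Psi=E_-(a)\Psi$ on all of $\R^d$.

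Concavity combined with $E_-(a)<E_-(b)$ yields the chord bound $E_-(t)-E_-(a)\ge c_0(t-a)$ for every $t\in[a,b]$, where $c_0:=(E_-(b)-E_-(a))/(b-a)>0$, and by Hellmann--Feynman $E_-'(a)=\int_{[-1/2,1/2]^d}V|\psi_a|^2\,dx\ge c_0$. Performing the ground-state transformation $\phi=\Psi g$ (well-defined since $\Psi>0$) and integrating by parts gives
\[
\langle\phi,(H_\omega-E_-(a))\phi\rangle = \int|\nabla g|^2\Psi^2\,dx + \sum_\gamma(\omega_\gamma-a)\int_{Q_\gamma}V(x-\gamma)|g|^2\Psi^2\,dx
\]
together with the analogous identity for $\langle\phi,H^m_\omega\phi\rangle$ in which $V(x-\gamma)$ is replaced by $\car_{Q_\gamma}(x)$, where $Q_\gamma:=\gamma+[-1/2,1/2]^d$. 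The task is thus reduced to a cube-by-cube comparison between $V$ and $\car_{Q_0}$ as weights against $\Psi^2|g|^2$.

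The heart of the argument is a local weighted Poincar\'e--Wirtinger-type estimate: I claim that for some $c_1\in(0,c_0]$ and $K>0$ depending only on $V$ and $\psi_a$ (uniform in $\gamma$ thanks to the $\Z^d$-periodicity of $\Psi$), one has
\[
\int_{Q_\gamma}V(x-\gamma)|g|^2\Psi^2\,dx\ \ge\ c_1\int_{Q_\gamma}|g|^2\Psi^2\,dx - K\int_{Q_\gamma}|\nabla g|^2\Psi^2\,dx
\]
for every $g\in H^1(Q_\gamma)$. To prove this I would decompose $g=\bar g_\gamma+(g-\bar g_\gamma)$ with $\bar g_\gamma$ the $\Psi^2$-weighted mean of $g$ on $Q_\gamma$, exploit the identity $\int_{Q_\gamma}V(\cdot-\gamma)\Psi^2\,dx=E_-'(a)\ge c_0$, and use the weighted Poincar\'e--Wirtinger inequality on $Q_\gamma$ (with a constant uniform in $\gamma$ since $\Psi$ is bounded away from $0$ and $\infty$ by periodicity and strict positivity) together with the relative form boundedness of $V$ with respect to $-\Delta$ (granted by (H1) and Sobolev embedding) to absorb the cross and quadratic remainders. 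Multiplying the local estimate by $(\omega_\gamma-a)\ge0$, summing over $\gamma$ and combining with the identities above yields
\[
\langle\phi,(H_\omega-E_-(a))\phi\rangle\ \ge\ (1-K(b-a))\int|\nabla g|^2\Psi^2\,dx + c_1\sum_\gamma(\omega_\gamma-a)\int_{Q_\gamma}|g|^2\Psi^2\,dx,
\]
which is $\ge C^{-1}\langle\phi,H^m_\omega\phi\rangle$ with $C^{-1}=\min\{1-K(b-a),c_1\}$, provided $K(b-a)<1$.

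\textbf{Main obstacle.} The two technical hurdles are (a) establishing the cube-by-cube estimate with a uniform constant $K$ despite the merely $L^p$ (rather than $L^\infty$) regularity of $V$, which I would handle by combining Sobolev embedding with the form-compactness of $V$ relative to $-\Delta$; and (b) removing the smallness constraint $K(b-a)<1$, which is not assumed in the theorem. For (b) my plan is to refine the cube decomposition into sub-cubes of side $\delta$ chosen so that $K\delta^2(b-a)<1$ (the local Poincar\'e constant scaling as $\delta^2$), compare against the correspondingly scaled monotonous model with single-site potential $\car_{\delta\cdot Q_0}$, and then relate the resulting IDS inequality back to $N_m$ at the cost of enlarging the final constant $C$.
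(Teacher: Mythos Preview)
Your reduction to a cube-by-cube operator inequality is exactly the paper's strategy, and the ground-state transform is a legitimate (if inessential) repackaging of it. Unwinding your transform, your local estimate
\[
\int_{Q_0}V|g|^2\Psi^2\ \ge\ c_1\int_{Q_0}|g|^2\Psi^2-K\int_{Q_0}|\nabla g|^2\Psi^2
\]
is \emph{equivalent} to the operator inequality $V+K(H^N_{a,0}-E_-(a))\ge c_1$ on $H^1(Q_0)$, i.e.\ to the spectral bound $E_-(a+K^{-1})\ge E_-(a)+K^{-1}c_1$. So neither of your ``obstacles'' is really one: no Poincar\'e--Wirtinger or Sobolev argument is needed for (a), and for (b) you can simply \emph{choose} $K$ with $K(b-a)<1$. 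Indeed, since $t\mapsto E_-(t)$ is real analytic and $E_-(b)>E_-(a)$, there is $b'>b$ with $E_-(b')>E_-(a)$; taking $K^{-1}=b'-a$ gives $K(b-a)=(b-a)/(b'-a)<1$ and $c_1=(E_-(b')-E_-(a))/(b'-a)>0$. This is exactly the paper's short abstract Lemma: writing $H^N_{t,0}-E_-(a)$ as the convex combination $\frac{t-a}{b'-a}(H^N_{b',0}-E_-(a))+\bigl(1-\frac{t-a}{b'-a}\bigr)(H^N_{a,0}-E_-(a))$ and bounding each piece below yields $C(H^N_{t,0}-E_-(a))\ge (H^N_{a,0}-E_-(a))+(t-a)$ for all $t\in[a,b]$ in one line.

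By contrast, your proposed fix for (b) does \emph{not} work. If you subdivide $Q_\gamma$ into sub-cubes of side $\delta<1$, then on each sub-cube the single random variable $\omega_\gamma$ still multiplies the restriction of the sign-changing potential $V(\cdot-\gamma)$ to that sub-cube. On a sub-cube where $V$ happens to be negative (such sub-cubes exist by the very hypothesis that $V$ changes sign), the weighted integral $\int V\Psi^2$ over that sub-cube is negative, so the analogue of your local estimate cannot hold with any $c_1>0$, no matter how small the Poincar\'e constant becomes. The scaling $K\sim\delta^2$ is therefore irrelevant: the failure is in the zeroth-order term, not the gradient term. In short, the smallness constraint is illusory and the subdivision is both unnecessary and incorrect; the one missing observation is that continuity of $E_-$ past $b$ lets you pick $K^{-1}>b-a$.
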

\noindent The upper bound is then deduced from the same bound for the
integrated density of states of $H^m_{\omega}$ which is standard, see
e.g.~\cite{MR94h:47068,Ki:89,MR1935594} and references therein.
\begin{proof}
  We first note it is well known that, at $E$, a continuity point of
  $N(E)$, the sequence
  \begin{equation}
    \label{eq:13}
    N_L^N(E)=\mathbb{E}\left(\frac{\#\{\text{eigenvalues of
        }H_{\omega,L}^N\ \leq E\}}{(2L+1)^d}\right)
  \end{equation}
  is decreasing and converges to $N(E)$ (see
  e.g.~\cite{MR94h:47068,Ki:89}). So to prove Theorem~\ref{thr:6}, it
  suffices to prove that, there exists $C>0$ such that, for $E$ real
  and $L$ large
  \begin{equation*}
    N_L^N(E)\leq N_L^{m}(C(E-E_-(a)))
  \end{equation*}
  where $N_L^{m}(E)$ is defined by~\eqref{eq:13} where
  $H_{\omega,L}^N$ is replaced by $H_{\omega,L}^m$ , i.e., the
  restriction of $H_{\omega}^m$
  to $[-L-1/2,L+1/2]^d$ with Neumann boundary conditions.\\
  By the Rayleigh-Ritz principle (\cite{MR58:12429c}, Section~XIII.1),
  this follows from the quadratic form inequality
  \begin{equation*}
    H_{\omega,L}^m\leq C (H_{\omega,L}^N-E_-(a)).
  \end{equation*}
  Under our assumptions on $V$, the form domain of both of these
  operators is $H^1([-L-1/2,L+1/2]^d)$. Now, as for $\psi\in
  H^1([-L-1/2,L+1/2]^d)$ and $\gamma\in\Z^d\cap[-L-1/2,L+1/2]^d$,
  $\psi\car_{\gamma+[-1/2,1/2]^d}\in H^1(\gamma+[-1/2,1/2]^d)$, this
  inequality in turns follows from the inequalities
  \begin{gather*}
    \forall\gamma\in\Z^d\cap+[-L-1/2,L+1/2]^d,\ \forall\psi\in
    H^1(\gamma+[-1/2,1/2]^d), \\
    \langle H_{\omega,L}^m\psi,\psi\rangle_{\gamma+[-1/2,1/2]^d} \leq
    C\langle
    (H_{\omega,L}^N-E_-(a))\psi,\psi\rangle_{\gamma+[-1/2,1/2]^d}.
  \end{gather*}
%
  Note that, here, the choice of the boundary condition is crucial:
  the form domain of the Neumann operator is the whole $H^1$-space;
  moreover, the Neumann quadratic form does not involve boundary
  terms.
  Taking into account the structure of our random potentials, we see
  that this will follow from the operator inequality
  \begin{equation}
    \label{eq:14}
    (H^N_{a,0}-E_-(a))+(t-a) \leq C (H_{t,0}^N-E_-(a)), \quad t\in [a,b], 
  \end{equation}
  where $H_{t,0}^N=-\Delta+tV$ on $[-1/2,1/2]^d$ with Neumann boundary
  conditions as before.
  \begin{Le}
    \label{le:3}
    Let $H_0$ be self-adjoint on $\mathcal{H}$ a separable Hilbert
    space such that $0=\inf\sigma(H_0)$. Let $V_1$ be a closed
    symmetric operator relatively bounded with respect to $H_0$ with
    bound $0$.  Set $H_1=H_0+V_1$ and $E_1=\inf\sigma(H_1)$. Assume
    $E_1>0$. Then, there exists $C>0$ such that, for $t\in[0,1]$, one
    has
    \begin{equation*}
      C(H_0+tV_1)\geq H_0 + t. 
    \end{equation*}
  \end{Le}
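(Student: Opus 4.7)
The plan is to exploit the fact that the inequality to be proved is affine in $t$. Set
\[
f(t) = C(H_0 + tV_1) - H_0 - t,
\]
regarded as a quadratic form on the common form domain of $H_0$ and $H_1$; the relative bound zero of $V_1$ with respect to $H_0$ guarantees that these form domains coincide and that $f(t)$ is well defined there. A direct computation gives $f(t) = (1-t)\,f(0) + t\,f(1)$ as quadratic forms. Hence, to obtain $f(t) \geq 0$ for every $t \in [0,1]$, it is enough to exhibit a single $C > 0$ with $f(0) \geq 0$ and $f(1) \geq 0$; the interior of $[0,1]$ then follows by convex combination of nonnegative forms.

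The endpoint $t = 0$ is trivial: $f(0) = (C-1)H_0 \geq 0$ as soon as $C \geq 1$, since $H_0 \geq 0$.

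The real work is at $t = 1$, where $f(1) = CH_1 - H_0 - 1$ and one has to absorb the unbounded operator $H_0$. Here I would invoke the relative form bound zero in the following way: for any $\delta \in (0,1)$ there exists $b_\delta > 0$ such that $-V_1 \leq \delta H_0 + b_\delta$, hence $H_1 = H_0 + V_1 \geq (1-\delta)H_0 - b_\delta$, which rearranges to $H_0 \leq (1-\delta)^{-1}(H_1 + b_\delta)$ as forms. Substituting, the inequality $CH_1 \geq H_0 + 1$ is reduced to $(C(1-\delta) - 1)\,H_1 \geq b_\delta + 1 - \delta$. Assuming $C(1-\delta) \geq 1$ and using the spectral bound $H_1 \geq E_1 > 0$, this in turn follows from the scalar inequality $(C(1-\delta) - 1)\,E_1 \geq b_\delta + 1 - \delta$, which is satisfied by any $C$ sufficiently large; such a $C$ is automatically $\geq 1$, so both endpoint bounds are secured at once.

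The main obstacle is the $t=1$ endpoint: one is asked to compare two unbounded operators whose only linkage is the spectral data $H_0 \geq 0$, $H_1 \geq E_1$ and the relative form bound. The positivity of the gap $E_1$ is precisely what gives enough room to trade the additive constant $1$ against the form-bounded perturbation $V_1$; once that trade is made, the affine structure of $f(t)$ settles the rest of $[0,1]$ at no extra cost, which is why linearity in $t$ is the organizing idea of the whole argument.
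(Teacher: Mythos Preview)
Your argument is correct. The affine decomposition $f(t)=(1-t)f(0)+tf(1)$ is valid, the endpoint $t=0$ is immediate, and at $t=1$ the relative form bound (which follows from the assumed operator bound~$0$ since $V_1$ is symmetric and $H_0\geq0$) legitimately gives $H_0\leq(1-\delta)^{-1}(H_1+b_\delta)$, after which the spectral gap $H_1\geq E_1>0$ closes the estimate.

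The paper proceeds differently. Rather than working at the endpoint $t=1$, it overshoots: by continuity of the ground state energy in the coupling (regular perturbation theory), there is some $\beta>1$ with $\inf\sigma(H_0+\beta V_1)=\delta>0$. One then writes, for $t\in[0,1]$,
\[
H_0+tV_1=(1-t/\beta)H_0+(t/\beta)(H_0+\beta V_1)\geq(1-1/\beta)H_0+(\delta/\beta)\,t,
\]
and reads off $C^{-1}=\min(1-1/\beta,\delta/\beta)$. Both proofs exploit the affinity in $t$, but the technical input differs: you invoke the relative form bound explicitly to dominate $H_0$ by $H_1$, while the paper hides that bound inside the perturbation-theoretic existence of $\beta>1$. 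The paper's route is a line shorter and yields an explicit constant; your route is more self-contained in that it does not call on perturbation theory as a black box.
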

  \noindent Lemma~\ref{le:3} applies to our case and, as a result, we
  obtain~\eqref{eq:14}. This completes the proof of
  Theorem~\ref{thr:6}.
\end{proof}
\begin{proof}[Proof of Lemma~\ref{le:3}]
  Regular perturbation theory ensures that, for some $\beta>1$, $\inf
  \sigma(H_0+\beta V_1)=\delta>0$. Then, for $t\in [0,1]$, we have
  \begin{align*}
    H_0+t V_1 &= (1-t/\beta)H_0+(t/\beta)(H_0+\beta V_1) \\
    &\geq (1-1/\beta)H_0 +(t/\beta)\delta\geq \frac1C(H_0+t)
  \end{align*}
  where $C^{-1}=\min (1-1/\beta,\delta/\beta)$.
\end{proof}
%
%
\subsection{The lower bound}
\label{sec:lower-bound}
We will use the techniques set up in~\cite{MR2003c:82037,MR1979772}.
We first recall two results from these papers which are valid in the
generality of the present work. Consider a random Schr{\"o}dinger operator
of the form~\eqref{eq:1} where
\begin{description}
\item[(HL1)] $V$ is a not identically vanishing, real valued,
  compactly supported function that is in $L^p$ (where $p=2$ if $d\leq
  3$ and $p>d/2$ if $d>3$);
\item[(HL2)] the random variables $(\omega_\gamma)_{\gamma\in\Gamma}$
  are independent, identically distributed, non trivial and bounded.
\end{description}
Clearly these two assumptions are consequences of assumption (H1).\\
The existence of $N(E)$, the integrated density of states defined
by~\eqref{eq:4} is known (\cite{MR94h:47068,MR2051995}).
\begin{Th}[\cite{MR2003c:82037}]
  \label{thr:4}
  Assume (HL1) and (HL2) hold. Pick $\eta_0>0$ and $I\subset\R$, a
  compact interval. Then, there exists $\nu_0>0$ and $\varepsilon_0>0$
  such that, for $0<\varepsilon<\varepsilon_0$, $E_-\in I$ and
  $n\geq\varepsilon^{-\nu_0}$, we have
  \begin{equation}
    \label{estpre}
    \begin{split}
      \mathbb{E}(N_{\omega,n}(E_-+&\varepsilon/2)-
      N_{\omega,n}(E_--\varepsilon/2))
      -e^{-(n\,\varepsilon^{\nu_0})^{-\eta_0}} \\\leq
      N(&E_-+\varepsilon)-N(E_--\varepsilon)\\
      &\leq \mathbb{E}(N_{\omega,n}(E_-+2\varepsilon)-
      N_{\omega,n}(E_--2\varepsilon))
      +e^{-(n\,\varepsilon^{\nu_0})^{-\eta_0}}.
    \end{split}
  \end{equation}
  where $N_{\omega,n}$ is the integrated density of states of the
  periodic operator $H_{\omega,n}$ defined in~\eqref{eq:10}.
\end{Th}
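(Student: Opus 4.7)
The plan is to compare $N$ with its periodic finite-volume analogue $\mathbb{E}[N_{\omega,n}]$ through a bracketing-and-concentration argument. I would start from the standard finite-volume approximation $N(E)=\lim_{L\to\infty}N_L^N(E)$ of \eqref{eq:13}, specialised to $L$ of the form $L=nM$, and then partition the cube $[-L-1/2,L+1/2]^d$ into roughly $M^d$ sub-cubes of side $n$. By Neumann--Dirichlet bracketing, the counting function of $H^N_{\omega,L}$ is sandwiched between the sums over the sub-cubes of the analogous counting functions (Neumann above, Dirichlet below), the defect being supported in a boundary shell whose total volume is $O(M^d n^{d-1})$ and which contributes at most $O(M^d n^{d-1})$ eigenvalues in any bounded energy window.

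Next I would replace the restriction of $H_\omega$ to each sub-cube (whose random potential couples to its neighbours through single-site bumps that straddle the boundary) by the periodic operator $H_{\omega^{(\gamma)},n}$ of \eqref{eq:10} associated with a single independent configuration $\omega^{(\gamma)}$ per sub-cube. The difference is localised in a boundary shell of thickness $O(1)$ and has effective rank $O(n^{d-1})$ after spectral truncation, so a standard spectral-shift or Birman--Schwinger type estimate bounds the resulting change in $N_{\omega^{(\gamma)},n}([E-\delta,E+\delta])$ by $O(1/n)$ in density. To prevent this boundary error from swallowing the target IDS measure of $[E_--\varepsilon,E_-+\varepsilon]$, I would use monotonicity in the energy to widen the interval from $[E_--\varepsilon/2,E_-+\varepsilon/2]$ on one side to $[E_--2\varepsilon,E_-+2\varepsilon]$ on the other; absorbing the defect this way is legitimate precisely when $n^{-1}\ll\varepsilon$, which is the role of the hypothesis $n\geq\varepsilon^{-\nu_0}$. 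The finite-volume IDS is then sandwiched between empirical means $M^{-d}n^{-d}\sum_\gamma N_{\omega^{(\gamma)},n}(I)$ of $M^d$ i.i.d.\ bounded random variables whose common expectation is the target quantity $\mathbb{E}[N_{\omega,n}(I)]$.

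I would close the estimate by a Hoeffding-type concentration inequality, converting the deviation of the empirical mean from its expectation into a Gaussian-type tail $\exp(-cM^d\delta^2)$ for any prescribed tolerance $\delta$; sending $M\to\infty$ removes the probabilistic fluctuation in the limit that defines $N$, and the exponential remainder in the statement absorbs the cost of making these approximations uniform in $E_-\in I$ and $\varepsilon$. The hardest step will be keeping the competing scales in balance: the boundary-shell correction demands $\varepsilon\gg 1/n$, while the concentration tail in $M$ must remain small once $M$ is chosen as a large power of $\varepsilon^{-1}$ to make the empirical mean converge to $N$. Controlling the spectral shift produced by the periodisation uniformly in the compact energy window through a trace-class bound on the resolvent difference, and then tuning $\nu_0$ and $\eta_0$ so that both the boundary error and the concentration tail fit under the single exponential $\exp(-(n\varepsilon^{\nu_0})^{-\eta_0})$, is where the real technical work lies.
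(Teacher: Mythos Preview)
First, note that the paper does not prove this theorem: it is quoted from~\cite{MR2003c:82037} and the authors merely remark that the minor modifications needed here are left to the reader. So the relevant comparison is with the argument in that reference.

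Your overall architecture --- bracketing into $n$-boxes, periodising, and then averaging over $M^d$ i.i.d.\ copies via a concentration inequality --- is plausible heuristics, but it cannot produce the error term in~\eqref{estpre}, and this is a genuine gap rather than a technicality. Neumann/Dirichlet bracketing and the replacement of box boundary conditions by periodic ones are finite-rank (surface) perturbations; after normalising by volume they yield an additive error of order $n^{-1}$ in the density, not an exponentially small one. Your ``widen the interval'' device would absorb a \emph{bounded eigenvalue shift} (an operator-norm perturbation), but a surface perturbation is not of that type: what is controlled is the number of eigenvalues that cross a level, not how far any given eigenvalue moves. Hence the $O(1/n)$ defect survives the widening. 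Sending $M\to\infty$ kills the Hoeffding fluctuation entirely (so it cannot be the source of the exponential in~\eqref{estpre}), and you are left with a polynomial remainder in $n$. In the application to the lower bound in Theorem~\ref{thr:3}, the main term $\mathbb{E}[N_{\omega,n}(E_-+\varepsilon/2)-N_{\omega,n}(E_--\varepsilon/2)]$ is itself exponentially small in $\varepsilon^{-1}$ (of order $n^{-d}P(\varepsilon,\alpha)$ with $P$ sub-exponential), so an $O(1/n)$ error swamps it and the argument collapses.

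The mechanism actually used in~\cite{MR2003c:82037} is different and is precisely what delivers the exponential error: one smooths $\mathbf{1}_{[E_--\varepsilon,E_-+\varepsilon]}$ to a $C^\infty$ function $f$ supported in $[E_--2\varepsilon,E_-+2\varepsilon]$ and equal to $1$ on $[E_--\varepsilon/2,E_-+\varepsilon/2]$ (this is the origin of the three nested intervals), represents $f(H_\omega)-f(H_{\omega,n})$ via the Helffer--Sj{\"o}strand formula, and then exploits Combes--Thomas exponential decay of the resolvent kernel to show that the trace per unit volume of the difference is $O(e^{-c\,n\varepsilon^{\nu_0}})$ type. You allude to ``a trace-class bound on the resolvent difference'' only in your last sentence, as if it were a mopping-up step; in fact it is the entire engine, and once you have it the bracketing scaffold is unnecessary.
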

\noindent In~\cite{MR2003c:82037}, Theorem~\ref{thr:4} is not stated
in exactly the same form and under slightly stronger but unnecessary
assumptions. The modifications necessary to obtain the form given here
are simple and left to the reader.
\vskip.1cm\noindent The second result we use is the following
\begin{Le}[\cite{MR2003c:82037}]
  \label{le:4}
  Consider a periodic Schr{\"o}dinger operator of the form $H_{\omega^P}$
  where $\omega^P$ satisfies (HP). Let $E_-=\inf\sigma(H_{\omega^P})$.
  Then, for any $\alpha\in(0,1)$ and any $\varepsilon$ sufficiently
  small, there exists a function $w_\varepsilon$ with the following
  properties
  \begin{enumerate}
  \item $w_\varepsilon$ is supported in a ball of center 0 and radius
    $2\varepsilon^{-(1+2\alpha)/2}$;
  \item $1\leq \Vert w_{\varepsilon}\Vert_{L^2}^2\leq 2$;
  \item $\Vert (H_{\omega^P}-E_-) w_{\varepsilon}\Vert_{L^2}^2\leq
    C\varepsilon^{2(1+\alpha)}$ for some $C>0$ (independent of
    $\varepsilon$).
  \end{enumerate}
\end{Le}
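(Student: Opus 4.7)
The plan is to construct $w_\varepsilon$ as a Bloch wavepacket for $H_{\omega^P}$ concentrated in quasi-momentum around a minimum $\theta_0$ of the lowest Floquet band, truncated in position space at a scale slightly larger than the wavepacket's natural localisation length. A wavepacket is indispensable here: the naive ansatz $c_\varepsilon\chi_R\phi$, with $R=\varepsilon^{-(1+2\alpha)/2}$, $\chi_R$ a plateau cutoff of width $R$, and $\phi$ the periodic ground state, gives only $\|(H_{\omega^P}-E_-)(c_\varepsilon\chi_R\phi)\|^2\sim R^{-2}=\varepsilon^{1+2\alpha}$, since the cross term $-2\nabla\chi_R\cdot\nabla\phi$ dominates the commutator. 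One recovers the sharper rate $\varepsilon^{2(1+\alpha)}$ by exploiting that the lowest Floquet band has a quadratic minimum.

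Floquet theory decomposes $H_{\omega^P}$ into a direct integral $\int^{\oplus} H_{\omega^P,\theta}\,d\theta$ over the Brillouin zone, with fibres $(-i\nabla+\theta)^2+V_{\omega^P}$ acting on a period cell with periodic boundary conditions, and lowest eigenvalue $E_0(\theta)$. Pick $\theta_0$ with $E_0(\theta_0)=E_-$, and let $\psi_\theta(x)=e^{i\theta\cdot x}u_\theta(x)$ be the normalised Bloch eigenfunction, $u_\theta$ periodic. Regular perturbation theory makes $\theta\mapsto(E_0(\theta),u_\theta)$ real analytic near $\theta_0$, and, since $\theta_0$ is a minimum, $0\leq E_0(\theta)-E_-\leq C|\theta-\theta_0|^2$. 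Fix $g\in C^\infty_c(\R^d)$ with $\supp g\subset B(0,1)$ and $\|g\|_{L^2}=1$, set $\eta=\varepsilon^{(1+\alpha)/2}$, and define
\[
  w(x) = \int \eta^{-d/2}\,g\bigl((\theta-\theta_0)/\eta\bigr)\,\psi_\theta(x)\,d\theta.
\]
Plancherel in the Floquet variables yields $\|w\|_{L^2(\R^d)}^2=1$, and, from the quadratic upper bound on $E_0(\theta)-E_-$,
\[
  \|(H_{\omega^P}-E_-)w\|_{L^2}^2 = \int\bigl(E_0(\theta)-E_-\bigr)^2 \eta^{-d}\bigl|g((\theta-\theta_0)/\eta)\bigr|^2 d\theta \leq C\eta^4 = C\varepsilon^{2(1+\alpha)}.
\]

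To enforce compact support I would integrate by parts $N$ times in $\theta$ in the definition of $w$, using analytic bounds on $\partial_\theta^k u_\theta$, to obtain, for every $N\in\N$,
\[
  |w(x)|+|\nabla w(x)| \leq C_N \eta^{d/2}(1+\eta|x|)^{-N}.
\]
Pick $\chi\in C^\infty_c(\R^d)$ with $\chi\equiv1$ on $B(0,1)$ and $\supp\chi\subset B(0,2)$; set $R=\varepsilon^{-(1+2\alpha)/2}$, $\chi_R(x)=\chi(x/R)$, $w_\varepsilon=\chi_R w$. Then $\supp w_\varepsilon\subset B(0,2R)$. Because $\eta R=\varepsilon^{-\alpha/2}\to\infty$, both the truncation error and the commutator $[-\Delta,\chi_R]w$ are super-polynomially small: for every $N$,
\[
  \|w-w_\varepsilon\|_{L^2}^2 + \|[-\Delta,\chi_R]w\|_{L^2}^2 \leq C_N\varepsilon^{N\alpha/2}.
\]
Combining with the previous bound gives $\|(H_{\omega^P}-E_-)w_\varepsilon\|_{L^2}^2 \leq C\varepsilon^{2(1+\alpha)}$ and $\|w_\varepsilon\|_{L^2}^2\in[1/2,1]$ for $\varepsilon$ small; a bounded rescaling places $\|w_\varepsilon\|_{L^2}^2$ in $[1,2]$.

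The most delicate point is the Schwartz-type decay of $w$, which requires real-analyticity of $\theta\mapsto u_\theta$ near $\theta_0$ with bounds uniform in $x$, and hence simplicity of $E_0(\theta_0)$ as a fibre eigenvalue. Simplicity is automatic at $\theta_0=0$ (Perron--Frobenius applied to the real operator $H_{\omega^P,0}$) and is generic otherwise, but may fail at exceptional degenerate minima. In that case one replaces $u_\theta$ by a smooth local orthonormal frame for the spectral projection onto the (finite-dimensional) lowest Floquet band near $\theta_0$, constructed via Kato--Rellich perturbation theory, and runs the wavepacket inside that frame; all estimates survive verbatim.
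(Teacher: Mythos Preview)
Your proof is correct and follows the standard Bloch wavepacket construction. The paper does not actually prove this lemma: it quotes the result from \cite{MR2003c:82037} (Klopp, \emph{J.\ Math.\ Phys.} 2002), noting that ``though not formulated as a lemma, this result is proved in section~2'' of that reference. The argument there is essentially the one you wrote down---a quasi-momentum wavepacket built from the lowest Bloch band near its minimum, truncated in position at a slightly larger scale---so your approach coincides with the cited source.

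One simplification worth noting: your closing paragraph about possible degeneracy of $E_0(\theta_0)$ is unnecessary in the present setting. For a real periodic Schr\"odinger operator $-\Delta+V_{\omega^P}$, the diamagnetic inequality $|(-i\nabla+\theta)\psi|\geq |\nabla|\psi||$ forces $E_0(\theta)\geq E_0(0)$ for every $\theta$, so the minimum of the lowest band is always attained at $\theta_0=0$. There, as you observed, Perron--Frobenius guarantees simplicity. Hence analytic perturbation theory applies directly and the frame construction is never needed. With that in hand, your bounds on $\partial_\theta^k u_\theta$, the integration by parts giving Schwartz-type decay of $w$, and the estimate $\eta R=\varepsilon^{-\alpha/2}\to\infty$ controlling the truncation error all go through exactly as you wrote.
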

\noindent Though not formulated as a lemma, this result is proved in
section 2 of~\cite{MR2003c:82037}.
\vskip.1cm\noindent We now prove the lower bound~\eqref{eq:9}. Pick
$\alpha\in(0,1)$ arbitrary. Let $\Gamma'$ be the lattice with respect
to which $\omega^P$ is periodic.  Pick $n\in\N^*$ such that
$n\sim\varepsilon^{-\nu}$ for $\nu>\nu_0$ ($\nu_0$ given by
Theorem~\ref{thr:4}), $\nu$ to be chosen sufficiently large. As
$w_\varepsilon$ constructed in Lemma~\ref{le:4} has compact support in
the interior of $C(0,n)$, it can be ``periodized'' to satisfy
quasi-periodic boundary conditions; for $\theta\in\R^d$, we set
\begin{equation*}
  w_{\varepsilon,\theta}(\cdot )=\sum_{\beta\in(2n+1)\Gamma'}
  e^{-i\beta\theta} w_\varepsilon(\cdot +\beta).
\end{equation*}
Then, $w_{\varepsilon,\theta}$ satisfies $w_{\varepsilon,\theta}
(x+\beta) =e^{i\beta\theta} w_{\varepsilon,\theta}(x)$ for $x\in\R^d$
and $\beta\in(2n+1)\Gamma'$, and we have
\begin{equation}
  \label{eq:24}
  \Vert w_{\varepsilon,\theta}\Vert_{L^2(C(0,n))}^2\geq1.
\end{equation}
We define
\begin{equation*}
  \Lambda_{\alpha}(\varepsilon)=\left\{\gamma\in\Gamma';\text{ for
    }1\leq j\leq d ,\ \vert\gamma_j\vert\leq
    \varepsilon^{-(1+3\alpha)/2}\right\}.
\end{equation*}
Since $V$ satisfies (HL1), if we assume
$\omega_\gamma\leq\varepsilon^{1+\alpha}$ for
$\gamma\in\Lambda_{\alpha}(\varepsilon)$, then
\begin{equation*}
  \Vert V_{\omega,n}(H_{\omega^P}-E_--1)^{-1}\Vert\leq
  C\varepsilon^{1+\alpha}.
\end{equation*}
This,~\eqref{eq:24} and point (3) of Lemma~\ref{le:4} imply that, for
some $C>0$ and $\varepsilon$ sufficiently small, we have
\begin{equation*}
  \Vert(H_{\omega,n,\theta}-E_-)w_{\varepsilon,\theta}\Vert\leq
  C\varepsilon^{1+\alpha}\Vert w_{\varepsilon,\theta}\Vert.
\end{equation*}
This proves that, for $\varepsilon$ sufficiently small, if
$\omega_\gamma\leq \varepsilon^{1+\alpha}$ for
$\gamma\in\Lambda_{\alpha}(\varepsilon)$, then for all
$\theta\in\T^*_n$, $H_{\omega,n,\theta}$ has an eigenvalue in
$[E_--\varepsilon/2,E_-+\varepsilon/2]$. Hence, we learn that
\begin{equation}
  \label{eq:15}
  \mathbb{E}(N_{\omega,n}(E_-+\varepsilon/2)-
  N_{\omega,n}(E_--\varepsilon/2))\geq \frac1C
  n^{-d}P(\varepsilon,\alpha)
\end{equation}
where $P(\varepsilon,\alpha)$ is the probability of the event
$\{\omega;\ \forall\gamma\in\Lambda_{\alpha}(\varepsilon),\
\omega_\gamma\leq\varepsilon^{1+\alpha}\}$. By assumption (HL2) and as
$n\sim\varepsilon^{-\nu}$, we have
\begin{equation}
  \label{eq:16}
  \liminf_{\substack{\varepsilon\to0\\\varepsilon>0}}
  \frac{\log\vert\log(P(\varepsilon,\alpha))\vert}{\log\varepsilon}
  \geq-\frac d2(1+3\alpha)-\alpha_-
\end{equation}
where $\alpha_-$ is defined in Theorem~\ref{thr:3}.
Plugging~\eqref{eq:16} and \eqref{eq:15} into~\eqref{estpre}, since
$\alpha>0$ is arbitrary, this completes the proof of
Theorem~\ref{thr:3}.\qed
\section{An example where Lifshitz tails fail to exist}
\label{sec:an-example-where}
We now will construct an example of the type described in
Theorem~\ref{thr:7}. \\
Let $\varphi\in\mathcal{C}^\infty((-1/2,1/2)^d)$ be positive,
reflection symmetric, constant near the boundary of $[-1/2,1/2]^d$ and
normalized in the cube. Let $V=\Delta\varphi/\varphi$; it
satisfies~\eqref{eq:17} and assumption (H2). Moreover, $\varphi$ is
the positive normalized ground state of $-\Delta+V$ on $[-1/2,1/2]^d$
with Neumann boundary conditions, the associated ground state energy
being $0$. Let $(\omega_\gamma)_{\gamma\in\Z^d}$ be Bernoulli random
variables with support $\{0,1\}$.
Pick $L\geq1$ and let $\varphi_L$ be the ground state of the operator
$H^N_{\omega,L}$ acting on $[-1/2,L-1/2]^d$ with Neumann boundary
conditions defined in Section~\ref{sec:determ-bott-spectr}. Then, this
ground state can be described as follows
\begin{itemize}
\item in $\gamma+[-1/2,1/2]^d$, $\varphi_L(\cdot)=
  L^{-d/2}\varphi(\cdot-\gamma)$ if $\omega_\gamma=1$;
\item in $\gamma+[-1/2,1/2]^d$, $\varphi_L(\cdot)=L^{-d/2} C_0$ if
  $\omega_\gamma=0$
\end{itemize}
where the constant $C_0$ is chosen to be equal to the constant value
of $\varphi$ near the boundary of $[-1/2,1/2]^d$. The thus
constructed ground state is not normalized.\\
We can now use the results of~\cite{MR89b:35127} to obtain a lower
bound on the second eigenvalue of $H^N_{\omega,L}$ that is independent
of $\omega$. Indeed, the construction above shows that, for any
$\omega$,
\begin{gather*}
  C_0\leq L^{d/2}\cdot\max_{x\in[-1/2,L-1/2]^d}\varphi_L(x)
  \leq\max_{x\in[-1/2,1/2]^d}\varphi(x),\\
  \min_{x\in[-1/2,1/2]^d}\varphi(x)\leq
  L^{d/2}\cdot\min_{x\in[-1/2,L-1/2]^d}\varphi_L(x)\leq C_0.
\end{gather*}
Then, Theorem 1.4 of~\cite{MR89b:35127} applied to $H^N_{\omega,L}$
and the Neumann Laplacian on the same cube guarantees that the second
eigenvalue of $H^N_{\omega,L}$ is larger than $cL^{-2}$ where the
constant $c$ does not depend on $\omega$, nor on $L$. The standard
upper bound for the integrated density of states by the normalized
Neumann counting function (see
e.g.~\cite{MR94h:47068,Ki:89,MR1935594}) yields, for any $L\geq 1$,
\begin{equation*}
  N(E)\leq \mathbb{E}\left(\frac{\#\{\text{eigenvalues of
      }H_{\omega,L}^N\ \leq E\}}{L^d}\right).
\end{equation*}
If we pick $L=(C^2E)^{-1/2}$ for some $C>0$ such that $cC^2\geq2$, the
second eigenvalue of $H^N_{\omega,L}$ is larger than $2E$, this for
any realization of $\omega$, we obtain
\begin{equation*}
  N(E)\leq L^{-d}= C^{d}E^{d/2}.
\end{equation*}
The standard lower bound for the integrated density of states by the
normalized Dirichlet counting function (see
e.g.~\cite{MR94h:47068,Ki:89,MR1935594}) yields
\begin{equation}
  \label{eq:8}
  \mathbb{E}\left(\frac{\#\{\text{eigenvalues of
      }H_{\omega,L}^D\ \leq E\}}{L^d}\right)\leq N(E)
\end{equation}
where $H^D_{\omega,L}$ is the Dirichlet restriction of $H_\omega$ to
$[-1/2,L-1/2]^d$.  Let $\psi_L$ be the (positive normalized) ground
state of the Dirichlet Laplacian on $[-1/2,L-1/2]^d$. And define
$\phi_L=\psi_L\cdot\varphi_L$. This smooth function clearly satisfies
Dirichlet boundary conditions and one computes, for some $C>1$,
\begin{equation}
  \label{eq:18}
  \begin{split}
    \langle H_{\omega,L}^D\phi_L,\phi_L\rangle&=
    \langle(-\Delta+V_\omega)\phi_L,\phi_L\rangle\\
    &=\langle(-\Delta\psi_L)\varphi_L,\phi_L\rangle
    -2\langle\nabla\psi_L\cdot\nabla\varphi_L,\phi_L\rangle\\
    &=2\frac{\pi^2}{L^2}\|\phi_L\|^2+\|\varphi_L\,|\nabla\psi_L|\,\|^2\leq
    \frac{C}{L^2}\|\phi_L\|^2.
  \end{split}
\end{equation}
where, in the final step, we integrated by parts and used the explicit
form of the positive normalized ground state of the Dirichlet
Laplacian.\\
Hence, taking $L=CE^{-1/2}$, as $C>1$, \eqref{eq:18} ensures that the
ground state energy of $H_{\omega,L}^D$ is less than $E$. Thus, from
\eqref{eq:8}, we learn
\begin{equation*}
  \frac1{C^d}E^{d/2}=L^{-d}\leq N(E).
\end{equation*}
Finally, combining the two estimates, we have proved that, for the
above random model, the density of states exhibits a van Hove
singularity at the bottom of the spectrum that is, there exists $C>1$
such that, for $E\geq0$, one has
\begin{equation*}
  \frac1C E^{d/2}\leq N(E)\leq C E^{d/2}.
\end{equation*}
This completes the proof of Theorem~\ref{thr:7}.\qed
%
%

\def\cprime{$'$} \def\cydot{\leavevmode\raise.4ex\hbox{.}}

\end{document}